\pgfplotsset{compat=newest} 
\newtheorem{problem}{Problem}
\newtheorem{lemma}{Lemma}
\newtheorem{prop}{Proposition}
\newlength\fwidth
\newlength\fheight
\title{\LARGE \bf
Sparse Resource Allocation for Control of Spreading Processes via Convex Optimization
}
\author{Vera L. J. Somers and Ian R. Manchester
\thanks{The authors are with the Australian Centre for Field Robotics (ACFR), School of Aerospace, Mechanical and Mechatronic Engineering,
        University of Sydney, NSW 2006, Australia
        {\tt\small \{v.somers, i.manchester\}@acfr.usyd.edu.au}}%
}
\begin{document}

\maketitle
\thispagestyle{empty}
\pagestyle{empty}

\begin{abstract}
In this letter we propose a method for sparse allocation of resources to control spreading processes -- such as epidemics and wildfires -- using convex optimization, in particular exponential cone programming. Sparsity of allocation has advantages in situations where resources cannot easily be distributed over a large area. In addition, we introduce a model of risk to optimize the product of the likelihood and the future impact of an outbreak. We demonstrate with a simplified wildfire example that our method can provide more targeted resource allocation compared to previous approaches based on geometric programming.
\end{abstract}

\section{INTRODUCTION}
Contagious diseases, computer viruses, and wildfires can all be thought of as spreading processes in which an initial localized outbreak spreads rapidly to neighboring nodes in a network \cite{Karafyllidis1997a,Bloem2008,Nowzari2016}. Because of the real-world risks associated with such events, there has been significant research into methods for modeling, prediction, and control. 

Spreading processes typically evolve over very large networks, e.g. global travel networks for epidemics, large geographic areas for wildfires and the internet for computer viruses. Therefore scalability of computational methods is important. Furthermore, sparsity of resource allocation solutions is often needed, because it can be difficult to distribute resources broadly. 
  
Spreading processes are commonly modeled as Markov processes. The most well-known models are the Susceptible-Infected-Susceptible (SIS) model and the Susceptible-Infected-Removed (SIR) model \cite{kermark1927contributions,bailey1975mathematical}. These stochastic models can be approximated as ordinary differential equation (ODE) models, which can in turn be approximated by a linear model \cite{Ahn2013,VanMieghem:2009,Nowzari2016} which is proven by \cite{VanMieghem:2009} to be an upperbound and is therefore usually the object of study.  

The problems of minimizing the spreading rate by removing either a fixed number of links or nodes in the network are both NP-hard \cite{van2011decreasing}. This fact has motivated the study of heuristics methods based on node-rankings of various forms \cite{Hadjichrysanthou2015,Dhingra2018,Lindmark}. However, in general these approaches will not be optimal in any sense and furthermore the assumption of complete link or node removal is often unrealistic.

A more realistic assumption is that spreading rate can be decreased and the recovery rate increased. This can be achieved applying resources to the nodes and links. Various methods have been proposed where the resource allocation is subject to budget constraints (e.g. \cite{Nowzari2016,Giamberardino2017,Bloem2008,Khanafer,DiGiamberardino2019,Mai2018,Liu2019b,Dangerfield2019,Torres2017,Preciado2014,Zhang2018,Han2015,Nowzari2017}). 

However, most of these do not result in sparse resource allocation. 
In addition, most of these papers consider minimizing the dominant eigenvalue of the linear dynamics, i.e. the overall spreading rate across the network. However, in many cases it is important to take into account node-dependent costs. For example, higher cost may associated with nodes representing populated areas when controlling a wildfire, or  more vulnerable members of the community in an epidemic.

A multi epidemic problem is considered in \cite{Dangerfield2019} and treated as a knapsack problem and sparse resource allocation is obtained. Another method for sparse resource allocation for linear network spread dynamics is proposed in \cite{Torres2017}, and global optimality is proven in the special case of diagonally symmetrizable matrices. 

In general, the problem of designing  sparse feedback gains for linear systems is non-convex and computationally challenging \cite{Lin2013}. However, linear spreading processes are positive systems \cite{berman1994nonnegative}, which enables control based on linear programming (LP) \cite{rantzer2015scalable} and geometric programming (GP) \cite{Boyd2007}, which allows global optima to be found with efficient numerical methods.

Optimal resource allocation via geometric programming has been studied in \cite{Preciado2014,Zhang2018,Han2015,Nowzari2017, Ogura2019}. The work most similar to ours is \cite{Preciado2014}, but our approach differs in both the cost function associated with spread and the resource model. 

In particular, the contributions of this paper are two-fold: firstly, a risk model that based on the product of the likelihood of an outbreak with its discounted future cost, with node-dependent weightings. Secondly, we propose a resource model which leads to sparse resource allocation. Taken together, we show via a simplified wildfire example that our approach can lead to more precisely targeted allocation of resources. Our proposed formulation is not technically a GP, but is similar in the sense that it is convex under logarithmic transformation, in particular it is an exponential cone program. Furthermore, under this transformation our resource model corresponds to $\ell_{1}$ type constraints which are known to encourage sparsity \cite{tibshirani1996regression, candes2006robust, donoho2006compressed} and sparsity can be further increased using reweighted iterations \cite{Candes2008}.

\section{PROBLEM AND MODEL FORMULATION}

\subsection{Notations}
The Hadamard product, i.e. element-wise multiplication, is indicated with the $\odot$ notation. $A\geq B$ indicates all elements $a_{ij}\geq b_{ij}$, in particular $A \geq 0$ indicates all elements are non-negative. All other notation is standard.


\subsection{SIS Spreading Process Model}
We study a spreading process on a graph $\mathcal{G}$ with $n$ nodes and edge set $\mathcal{E}$, where each node $i \in \{1, 2, ..., n\}$ has a state $X_i(t)$ associated with it. For the basic SIS model \cite{kermark1927contributions} a node can be in two states: infected, i.e. $X_i(t)=1$, or susceptible to infection from neighboring nodes, i.e. $X_i(t)=0$. An infected node recovers with probability $\delta_{i}\Delta t$ to $X_i(t+\Delta t)=0$ and the process spreads from infected node $j$ to susceptible node $i$ with probability $\beta_{ij} \Delta t$. 

We now define $x_{i}(t)=E(X_{i}(t))=P(X_{i}(t)=1)$ as the probability of a node $i$ being infected at time $t$. Using a mean-field estimation and Kolmogorov forward equations to build an approximate deterministic model from the stochastic model \cite{Nowzari2016,Preciado2014}, we obtain $n$ coupled nonlinear Markov differential equations
\begin{equation}
\label{eq:nonL}
\dot{x}_{i}(t)=(1-x_{i}(t))\sum^{n}_{j=1}\beta_{ij}x_{j}(t)-\delta_{i}x_{i}(t).
\end{equation}
Here, the main assumption taken is that all pairs of random variables have zero covariance.

A linear model is now obtained by linearizing this deterministic model around the infection-free equilibrium point ($\mathbf{x}=0$) \cite{Preciado2014} and we can define the system on the graph by the linear differential equation
\begin{equation}
\label{eq1}
   \dot{x}(t)=Ax(t) 
\end{equation}
where $x(t)=[x_{1}(t),...,x_{n}(t)]^{T}$ with $t\geq0$, is the state of the system and the sparse state matrix $A$ is defined by
 \begin{equation}
  \label{eq:epi}
 a_{ij} = 
\begin{cases}
-\delta_{i}  \le 0 &\quad \text{if}\quad  i=j, \\
\beta_{ij} \ge 0&\quad \text{if}\quad  i\neq j, (i,j) \in \mathcal{E}, \\
0  &\quad \text{otherwise.}
 \end {cases}
 \end{equation}
Because all off-diagonal entries $a_{ij}$ are assumed to be nonnegative, $A$ is Metzler and the system is positive, i.e. if $x_i(0)\ge 0$ for all $i$, then $x_i(t)\ge 0$ for all $t\ge 0$ \cite{berman1994nonnegative}. 

It is proven in \cite{Mieghem2011,Li2012} that the obtained probabilities from the approximation (\ref{eq:nonL}) upper bound the true values. Furthermore the linear model also upper bounds the nonlinear model \cite{VanMieghem:2009,Preciado2014}. Therefore similar to many papers, we study the linear model to control the underlying process. The linear model accurately captures the initial exponential phase of growth in which intervention is essential, however becomes less accurate as a large percentage of nodes is affected. 

\subsection{Risk Model}
\label{subsec:R}
To construct our model of risk, we first define the following cost function:
\begin{equation}
\label{eq:cost2}
J\left(x(0)\right)=  \int_{0}^\infty e^{-rt} Cx(t)dt
\end{equation}
with the system where $x(t)$ satisfies (\ref{eq1}) and $C= [c_1, ..., c_n]$ is a row vector defining the cost associated with each node $i$, with each $c_i\ge 0$. The discount rate $r>0$ can be tuned to emphasize near-term cost over long-term cost. If $r$ is large enough such that $A-rI$ is Hurwitz-stable, i.e. all eigenvalues have negative real parts, then $J\left(x(0)\right)$ is finite for all $x(0)$ and a linear function of the initial state, i.e.
\begin{equation}
\label{eq:R2}
   \int_{0}^\infty e^{-rt} Cx(t)dt = \sum_{i=1}^n p_i x_i(0)
\end{equation}
where $p_{i}x_i(0)$ can be seen as the discounted cost-to-go associated with each node $i$. That is, if a spreading process would start at node $i$ what will the future discounted cost be of the process spreading from there over the graph $\mathcal{G}$. We, therefore, define the vector $p$ as the \textit{node impact}, which can also be interpreted as a node priority for surveillance of spreading processes as was proposed in \cite{icrapaper2019}.

We can now find the node impact $p$ via two different methods. First, by direct calculation
\begin{equation}
\label{eq:MM}
p^{T}=C\left (rI-A\right )^{-1}
\end{equation}
as derived in \cite{icrapaper2019}. From this representation we can establish two useful properties:
\begin{lemma}
\label{L2}
Each element of $p$ is non-negative and a monotone function of each element of $A$.
\end{lemma}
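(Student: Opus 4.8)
The plan is to exploit the structure of $M:=rI-A$. First I would observe that, because $A$ is Metzler (all off-diagonal entries $a_{ij}\ge 0$), the off-diagonal entries of $M$ equal $-a_{ij}\le 0$, so $M$ is a $Z$-matrix. Combined with the standing hypothesis that $A-rI$ is Hurwitz — equivalently, that the eigenvalue of the Metzler matrix $A$ with largest real part (which is real by Perron--Frobenius) lies strictly below $r$ — this makes $M$ a nonsingular $M$-matrix. I would then invoke the standard characterization of nonsingular $M$-matrices \cite{berman1994nonnegative}, namely that $M^{-1}\ge 0$ elementwise. Non-negativity of $p$ is then immediate, since $p^{T}=CM^{-1}$ is the product of the nonnegative row vector $C\ge 0$ with the nonnegative matrix $M^{-1}$.

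For the dependence on $A$ I would first establish $M^{-1}\ge 0$ constructively, which handles monotonicity at the same time. Choosing $s>0$ large enough that $N:=A+sI\ge 0$ (possible since $A$ is Metzler), I can write $M=(r+s)I-N$, and because $\rho(N)=\lambda_{\max}(A)+s<r+s$ under the stability hypothesis, expand
\begin{equation}
(rI-A)^{-1}=\frac{1}{r+s}\sum_{k=0}^{\infty}\left(\frac{N}{r+s}\right)^{k}.
\end{equation}
Every term is a product of entries of the nonnegative matrix $N$, so each entry of $(rI-A)^{-1}$ is a power series in the entries of $A$ with nonnegative coefficients. Increasing any single entry $a_{kl}$ therefore cannot decrease any entry of $(rI-A)^{-1}$, and left-multiplication by $C\ge 0$ shows each $p_i$ is non-decreasing in each $a_{kl}$.

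Alternatively — and more transparently — I would differentiate directly. Using $\partial M^{-1}/\partial a_{kl}=M^{-1}(\partial A/\partial a_{kl})M^{-1}=M^{-1}E_{kl}M^{-1}$, where $E_{kl}$ is the standard basis matrix with a single $1$ in position $(k,l)$, the $i$-th component of $\partial p^{T}/\partial a_{kl}=CM^{-1}E_{kl}M^{-1}$ collapses to
\begin{equation}
\frac{\partial p_i}{\partial a_{kl}}=p_k\,(M^{-1})_{li}\ge 0,
\end{equation}
since both factors are nonnegative by the first part. This confirms monotonicity locally, and hence globally along any path that increases a single entry.

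The main obstacle I anticipate is not the algebra but the bookkeeping of the admissible domain: both representations above are valid only while $M=rI-A$ remains a nonsingular $M$-matrix, i.e.\ while $A-rI$ stays Hurwitz and $J(x(0))$ stays finite. I would therefore state the monotonicity as holding throughout the (connected) region of parameters for which $p$ is defined, noting that increasing an off-diagonal entry preserves the Metzler structure and that the conclusion persists right up to the stability boundary, beyond which $p$ ceases to exist.
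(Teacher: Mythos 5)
Your proof is correct; the non-negativity half coincides with the paper's, while the monotonicity half takes a genuinely different route. The paper, like you, notes that $rI-A$ is a nonsingular (positive-stable) M-matrix, hence inverse-positive, and multiplies by $C\ge 0$ to get $p\ge 0$. For monotonicity, however, the paper simply cites the M-matrix comparison theorem of Fiedler and Pt\'ak \cite{fiedler1962matrices}: if $M_1\ge M_2$ are nonsingular M-matrices then $M_2^{-1}\ge M_1^{-1}$; decreasing $A$ elementwise increases $rI-A$, decreases its inverse, and hence decreases $p$. You instead give two self-contained arguments: the Neumann series $(rI-A)^{-1}=\frac{1}{r+s}\sum_{k\ge 0}\bigl(N/(r+s)\bigr)^{k}$ with $N=A+sI\ge 0$, which exhibits every entry of the resolvent as a power series in the entries of $A$ with nonnegative coefficients, and the sensitivity identity $\partial p_i/\partial a_{kl}=p_k\bigl((rI-A)^{-1}\bigr)_{li}\ge 0$, whose derivation checks out. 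Your versions buy something: they are constructive, need no external comparison theorem, and the derivative formula quantifies exactly how much each $p_i$ responds to each $a_{kl}$, which could be useful for gradient-based or greedy allocation heuristics. The paper's version buys brevity and cleaner domain bookkeeping: since the application only ever decreases $A$ (allocating resources reduces $\beta_{ij}$ and increases $\delta_i$), the perturbed matrix $rI-A'$ is a Z-matrix dominating a nonsingular M-matrix and is therefore automatically a nonsingular M-matrix, so the stability caveat you carefully flag never arises in that direction; your caveat is genuinely needed only for increases of the entries of $A$, the direction the paper handles implicitly by its one-sided phrasing.
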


\begin{proof}
 $(rI-A)$ is a positive-stable M-matrix, which is inverse positive \cite[p.~134]{berman1994nonnegative}, i.e. all elements of the matrix $\left (rI-A\right )^{-1}$ are non-negative, as are all elements of $C$ by construction. Therefore $p_{i}$ will always be non-negative. Furthermore $(rI-A)$ is non-singular and given two non-singular M-matrices $A$ and $B$, if $A\geq B$, then $B^{-1}\geq A^{-1}$ \cite{fiedler1962matrices}. When we combine this with the fact that all elements of $C$ are non-negative, this implies that reducing spreading rate $\beta_{ij}$, i.e. reducing $A$, hence making $rI-A$ less negative element wise, i.e. making it larger element wise and hence reducing the inverse, will reduce the node impact $p$ as given in (\ref{eq:MM}), i.e. $p_{i}$ is a monotone function of $A_{jk}$.
\end{proof}

Lemma \ref{L2} implies that reducing the spreading rate or increasing recovery rate can never increase the node impact and vice-versa. An important application of this is robust solutions: if spreading or recovery rates are uncertain, but known to be in an interval, then worst-case node impact can be calculated using the boundary values of the intervals.

The second method to find the node impact $p$ is via a linear program (LP) which is suitable for extension to include resource allocation. The equivalent LP is
\begin{equation}
\label{eq:LP1}
\begin{split}
    \text{minimize} & \quad \quad  |p|_{1} \\
    \text{such that} & \quad \quad p \geq 0, \quad \quad p^{T}A - r p^{T} \leq - C
\end{split}
\end{equation}

\begin{lemma}
The LP (\ref{eq:LP1}) is equivalent to (\ref{eq:MM}).
\end{lemma}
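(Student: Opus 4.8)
The plan is to show that the feasible set of the LP has a unique componentwise-minimal element, namely $p^{T}=C(rI-A)^{-1}$, and that this element therefore minimizes $|p|_{1}$. First I would rewrite the constraints: the inequality $p^{T}A-rp^{T}\le -C$ is equivalent to $p^{T}(rI-A)\ge C$, and since the remaining constraint forces $p\ge 0$, the objective reduces to $|p|_{1}=\mathbf{1}^{T}p$. So the LP is: minimize $\mathbf{1}^{T}p$ subject to $p\ge 0$ and $p^{T}(rI-A)\ge C$.

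Next I would verify that the candidate $p^{*T}=C(rI-A)^{-1}$ from (\ref{eq:MM}) is feasible. Because $r$ is chosen so that $A-rI$ is Hurwitz-stable, $(rI-A)$ is a nonsingular positive-stable M-matrix and hence inverse-positive, $(rI-A)^{-1}\ge 0$, exactly as already used in the proof of Lemma \ref{L2}. Combined with $C\ge 0$ this gives $p^{*}\ge 0$, and by construction $p^{*T}(rI-A)=C$, so the second constraint holds with equality. Thus $p^{*}$ satisfies both constraints.

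The core of the argument is to show that every feasible $p$ dominates $p^{*}$ elementwise. For an arbitrary feasible $p$, introduce the slack $s^{T}:=p^{T}(rI-A)-C\ge 0$. Right-multiplying by the nonnegative matrix $(rI-A)^{-1}$ yields $p^{T}=p^{*T}+s^{T}(rI-A)^{-1}$, and since $s\ge 0$ and $(rI-A)^{-1}\ge 0$ the correction term is nonnegative, so $p\ge p^{*}$. Because $\mathbf{1}>0$ strictly, this gives $\mathbf{1}^{T}p\ge \mathbf{1}^{T}p^{*}$, with equality only when $p=p^{*}$. Hence $p^{*}$ is the unique optimizer and the LP value coincides with (\ref{eq:MM}), proving equivalence.

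I expect the only genuine subtlety to be the inverse-positivity step, which is what converts the one-sided constraint into a componentwise domination; everything else is routine rewriting. The slack-variable decomposition is what makes the monotonicity transparent, and it reuses precisely the M-matrix property already invoked for Lemma \ref{L2}, so no additional machinery is required.
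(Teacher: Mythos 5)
Your proof is correct and follows essentially the same route as the paper's: verify that $p^{*T}=C(rI-A)^{-1}$ is feasible, then use inverse-positivity of the M-matrix $(rI-A)$ to show every feasible $p$ dominates $p^{*}$ elementwise, hence $p^{*}$ minimizes the $\ell_{1}$ objective. Your slack-variable decomposition $p^{T}=p^{*T}+s^{T}(rI-A)^{-1}$ is just a more explicit way of writing the paper's step of right-multiplying the constraint by $(rI-A)^{-1}\ge 0$, and your added uniqueness observation (via nonsingularity of $(rI-A)$) is a harmless strengthening.
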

\begin{proof}
Equivalence can be shown by application of Lemma \ref{L2} and the properties of LPs. Let $\bar{p}$ be the node impact calculated using (\ref{eq:MM}), then $\bar p\ge 0$ and $\bar{p}^{T}A - r \bar{p}^{T} =- C $ hence $\bar{p}$ is feasible for the LP, but any other feasible $p$ has $p^{T}A - r p^{T} \le - C$ or $p^T\ge =C(rI-A)^{-1} = \bar{p}$, so $|p|_{1}=\sum p_{i} \geq \sum \bar{p}_{i} = |\bar{p}_{1}|$and $p=\bar{p}$ is optimal for the LP.
\end{proof}

We can now define a risk model associated with the spreading process on the graph $\mathcal{G}$. We define the risk $R_{i}$ at node $i$ as the product of the likelihood of an outbreak starting at node $i$, i.e. the estimated probability $\hat{x}_{i}(0)$, and the node impact $p_{i}$. The bounded risk of an outbreak occurring can now be defined as
\begin{equation}
\label{eq:Risk2}
   R=\hat{x}(0) \odot p.
\end{equation}

In this letter we focus on minimizing the maximum risk $\max_{i}(R_{i})$, i.e. we allocate resources to reduce the impact of the worst expected localized outbreak. However the total risk, i.e. $\sum_{i} R_{i}$, can also be taken. We compare our risk model with the cost function in \cite{Preciado2014} and others:
\begin{equation}
\label{eq:eigP}
   J_{\lambda}=\lambda_{max}(A)
\end{equation}
where $\lambda_{max}$ is the dominant eigenvalue, i.e. the eigenvalue with largest real part. 

\subsection{Resource Allocation Model}
Now that the spreading process model and risk associated with it are defined, we can look into how to allocate resources to the system. We aim to reduce the risk by reducing the spreading rate and increasing recovery rate within defined bounds. That is the updated $\beta_{ij}$ and $\delta_{i}$  are restricted by respectively $0 < \underline{\beta}_{ij} \leq \beta_{ij} \leq \overline{\beta}_{ij}$ and $0 < \underline{\delta}_{i} \leq \delta_{i} \leq \overline{\delta}_{i}<1$. We now propose to define the resource allocation models as 
\begin{equation}
\label{eq:RM}
f_{ij}\left(\beta_{ij}\right)=w_{ij}\text{log}\left(\frac{\overline{\beta}_{ij}}{\beta_{ij}}\right), \quad g_{i}\left(\delta_{i}\right)=w_{ii}\text{log}\left(\frac{1-\underline{\delta}_{i}}{1-\delta_{i}}\right)
\end{equation}
where $w_{ij}$ and $w_{ii}$ are weightings that indicate the cost of respectively reduction of spreading rate $\beta_{ij}$ and increase of recovery rate $\delta_{i}$. E.g. $w_{ij}\log(2)$ is the cost of reducing the spread rate from $i$ to $j$ to half its original value.

These logarithmic resource models can be understood as a proportional decrease. That is, a reduction in $\beta$ (and increase in $\delta$) in certain proportion always takes the same amount of resources, because $\text{d}f_{ij}=-\frac{\text{d}\beta_{ij}}{\beta_{ij}}$. Furthermore this implies that it is impossible for $\beta_{ij}$ to become $0$. This corresponds with real spreading scenarios where it is impossible to completely eliminate the possibility of spread. 

We compare our proposed resource model with the resource model in \cite{Preciado2014}, i.e.
\begin{equation}
\label{eq:Preciado}
f_{ij}\left(\beta_{ij}\right)=\frac{\beta^{-1}_{ij}-\overline{\beta}^{-1}_{ij}}{\underline{\beta}^{-1}_{ij}-\overline{\beta}^{-1}_{ij}}, \quad g_{i}\left(\delta_{i}\right)=\frac{(1-\delta_{i})^{-1}-(1-\underline{\delta}_{i})^{-1}}{(1-\overline{\delta}_{i})^{-1}-(1-\underline{\delta}_{i})^{-1}}
\end{equation}
as visualized for the spreading rate in Fig. \ref{fig:RModel} for $\underline{\beta}=0.05$ and $\overline{\beta}=1$. Notice that the  proposed model associates higher cost for low resource investments, encouraging sparse allocation.

%
\begin{figure}
  \centering
 \def\svgwidth{0.4\textwidth}
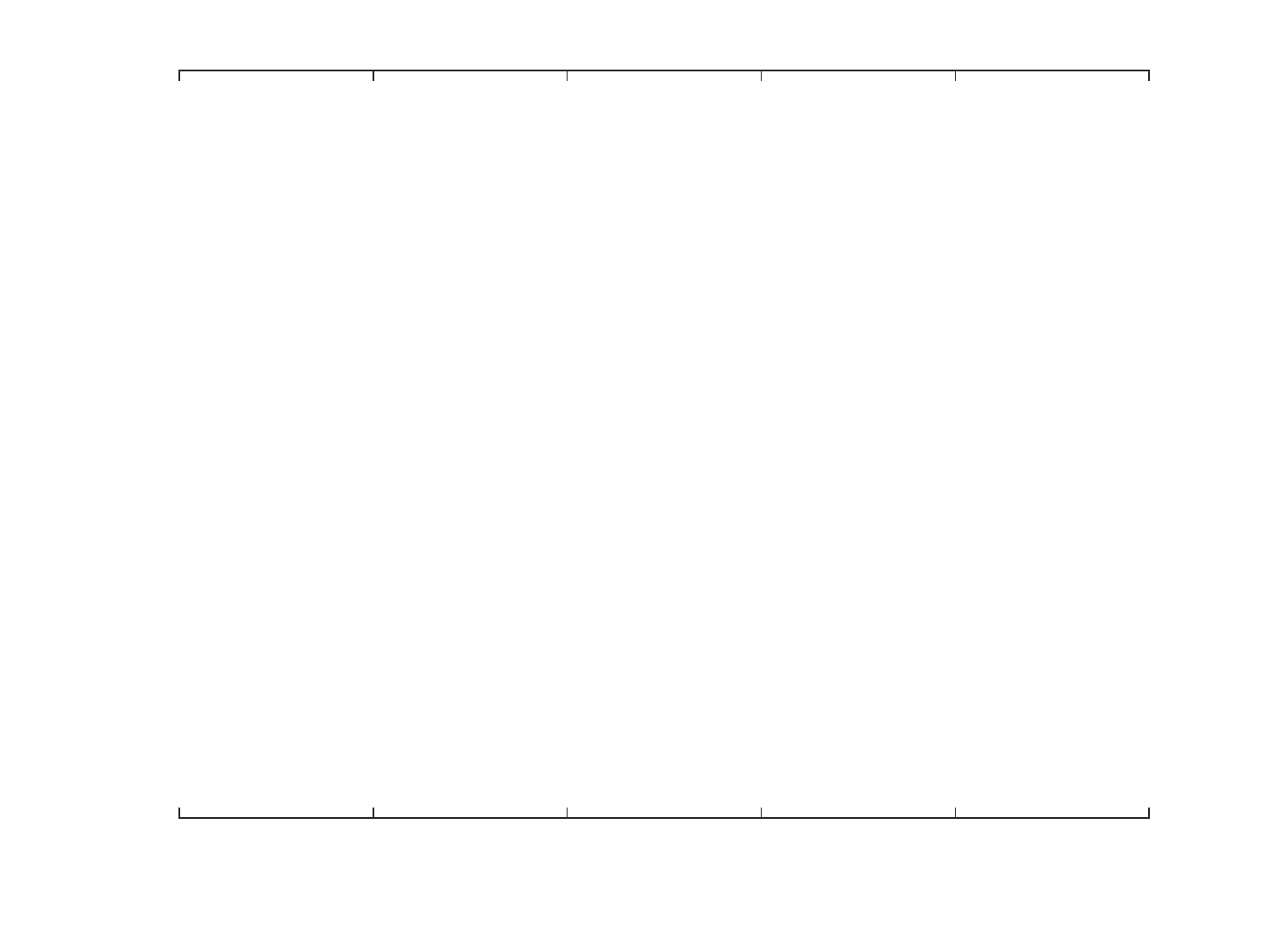
    \caption{Comparison of resource models (\ref{eq:RM}) and (\ref{eq:Preciado}) for investment on spreading rate $\beta$}
    \label{fig:RModel}
\end{figure}
%


\subsection{Problem Statements}
We want to keep both the risk and the allocation of resources small. Therefore we study two closely related problems of sparse resource allocation for spreading processes: 

\begin{problem}[Resource-Constrained Risk Minimization]
Given a defined resource allocation budget $\Gamma$, a cost $c_{i}$ associated with each node $i$, find the optimal spreading and recovery rates $\beta_{ij}$ and $\delta_{i}$ that via sparse resource allocation minimize the maximum risk $\hat{x}(0)\odot p$, i.e. find the updated state matrix $A$ that minimizes
\begin{align}
    \underset{p, \beta, \delta}{\text{minimize}} & \quad \quad  \text{max}(\hat{x}(0)\odot p) \label{eq:C0}\\
    \text{such that} & \quad \quad p^{T}A - r p^{T} \leq - C \label{eq:C1}\\
 & \quad \quad \sum_{ij} f_{ij}\left(\beta_{ij}\right) + \sum_{i} g_{i}\left(\delta_{i}\right) \leq \Gamma \label{eq:C2}\\ 
& \quad \quad p \geq 0, \quad 0 < \underline{\beta}_{ij} \leq \beta_{ij} \leq \overline{\beta}_{ij} \label{eq:C3}\\
& \quad \quad 0 < \underline{\delta}_{i} \leq \delta_{i} \leq \overline{\delta}_{i} < 1\label{eq:C33}
\end{align} 
where $A$ is defined as per (\ref{eq:epi}).
\end{problem}

\begin{problem}[Risk-Constrained Resource Minimization]
Find the optimal spreading and recovery rates $\beta_{ij}$ and $\delta_{i}$ that via sparse resource allocation minimize the amount of resources required, given an upper bound on the maximum risk $\gamma$ and a cost $c_{i}$ associated with each node $i$, i.e. find the updated state matrix $A$ that minimizes
\begin{align}
     \underset{p, \beta, \delta}{\text{minimize}} & \quad \quad   \sum_{ij} f_{ij}\left(\beta_{ij}\right) + \sum_{i} g_{i}\left(\delta_{i}\right) \\
    \text{such that} & \quad \quad p^{T}A - r p^{T} \leq - C \\
    & \quad \quad p_{i}\hat{x}_{i}(0) \leq \gamma,  \quad p \geq 0\\
& \quad \quad 0 < \underline{\beta}_{ij} \leq \beta_{ij} \leq \overline{\beta}_{ij} \\
& \quad \quad 0 < \underline{\delta}_{i} \leq \delta_{i} \leq \overline{\delta}_{i} < 1
\end{align}
where $A$ is defined as per (\ref{eq:epi}).
\end{problem}

\section{A CONVEX FRAMEWORK FOR SPARSE RESOURCE ALLOCATION }
In this section we show that Problems 1 and 2 can be reformulated as convex optimization problems, in particular exponential cone programs, which recent versions of commercially available solvers, e.g. MOSEK, can solve efficiently. Furthermore, we discuss how the proposed resource model leads to sparse resource allocation. Our problem formulations are not technically GPs, but are similar in that they are convex after logarithmic transformation.

To save space we present here the constraints needed in both the optimization problems that we formulate below: 
\begin{align}
&  \text{log} \left (\sum_{i \neq j} \text{exp} \left(y_{i} + \text{log}\left(\frac{\overline{\beta}_{ij}}{1+r}\right) - u_{ij} -  y_{j} \right) \right. \nonumber\\ 
&  + \text{exp}\left(\text{log}\left(\frac{1-\underline{\delta}}{1+r}\right) - v_{j}\right) \nonumber \\
&  \left.+ \text{exp}\left (\text{log}\left (\frac{C_{j}}{1+r}\right) - y_{j}\right)\right) \leq 0 \quad \forall j  \label{eq:C4},
\end{align}
\begin{equation}
0 \leq u_{ij} \leq w_{ij}\text{log}\left (\frac{\overline{\beta}_{ij}}{ \underline{\beta}_{ij}}\right ) \label{eq:C6},
\end{equation}
\begin{equation}
0 \leq v_{i} \leq w_{ii}\text{log}\left (\frac{1-\underline{\delta}_{i}}{ 1-\overline{\delta}_{i}}\right ) \label{eq:C7}. 
\end{equation}
 
\begin{prop} 
Problem 1 is equivalent to the following convex optimization problem under the transformation $y=\text{log}(p)$ and $u_{ij}=f_{ij}\left(\beta_{ij}\right)$ and $v_{i}=g_{i}\left(1- \delta_{i}\right)$ 
\begin{align}
     \underset{y, u, v}{\text{minimize}} & \quad \quad  \text{max}(\text{log}(\hat{x}(0))+ y) \label{eq:OF}\\
    \text{such that} & \quad \quad (\ref{eq:C4}), (\ref{eq:C6}), (\ref{eq:C7}), \nonumber \\
& \quad \quad  \sum_{ij} u_{ij} + \sum_{i} v_{i} \leq \Gamma \label{eq:C5}.
\end{align}
\end{prop}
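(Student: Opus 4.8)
The plan is to exhibit the stated change of variables as a bijection between the feasible set of Problem~1 and that of the proposed program, to check that it carries the objective and each constraint onto the claimed form, and to observe that every resulting function is convex: a pointwise maximum of affine functions in the objective~(\ref{eq:OF}), a log-sum-exp of affine functions in~(\ref{eq:C4}), and linear/box constraints in~(\ref{eq:C5})--(\ref{eq:C7}). All of the substantive work sits in the dynamics constraint~(\ref{eq:C1}); the remaining constraints transform term by term. First I would expand~(\ref{eq:C1}) row by row. Using the Metzler structure~(\ref{eq:epi}), the $j$-th component of $p^{T}A - rp^{T}\le -C$ reads
\[
\sum_{i\neq j} p_i\beta_{ij} - p_j\delta_j - rp_j + c_j \le 0 .
\]

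The negative terms $-p_j\delta_j - rp_j$ prevent the left-hand side from being a posynomial, so a direct logarithmic transform would not be convex. The crux is to write $\delta_j + r = (1+r)-(1-\delta_j)$, move $p_j(1+r)$ to the right, and divide by the strictly positive quantity $p_j(1+r)$, giving
\[
\frac{\sum_{i\neq j} p_i\beta_{ij}}{p_j(1+r)} + \frac{1-\delta_j}{1+r} + \frac{c_j}{p_j(1+r)} \le 1 .
\]
Every term is now positive. Under the reparameterization $p_i=e^{y_i}$, $u_{ij}=f_{ij}(\beta_{ij})$ and $v_j=g_j(\delta_j)$---so that $\beta_{ij}$ and $1-\delta_j$ are recovered as exponentials of affine functions of the new variables, with unit weights $\beta_{ij}=\overline{\beta}_{ij}e^{-u_{ij}}$ and $1-\delta_j=(1-\underline{\delta}_j)e^{-v_j}$---each summand becomes the exponential of an affine function of $(y,u,v)$. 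Taking logarithms of the inequality (a monotone operation) yields precisely~(\ref{eq:C4}), whose left-hand side is convex as a log-sum-exp of affine functions.

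Then I would dispatch the remaining pieces. The map $\beta_{ij}\mapsto u_{ij}=f_{ij}(\beta_{ij})$ is a strictly decreasing bijection of $[\underline{\beta}_{ij},\overline{\beta}_{ij}]$ onto $[0,\,w_{ij}\log(\overline{\beta}_{ij}/\underline{\beta}_{ij})]$, which is exactly~(\ref{eq:C6}); likewise $\delta_i\mapsto v_i=g_i(\delta_i)$ gives~(\ref{eq:C7}), so the box constraints~(\ref{eq:C3})--(\ref{eq:C33}) translate without loss. The budget~(\ref{eq:C2}) is already expressed through $f_{ij},g_i$, hence becomes the linear constraint~(\ref{eq:C5}). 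For the objective, monotonicity of $\log$ gives $\arg\min\max_i(\hat{x}_i(0)p_i)=\arg\min\max_i(\log\hat{x}_i(0)+y_i)$, and a pointwise maximum of affine functions is convex, yielding~(\ref{eq:OF}). Assembling these, both the objective and every constraint are convex, so the transformed problem is an exponential cone program.

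The main obstacle is the sign issue just described: handling the $-\delta_j$ term is what distinguishes this formulation from a standard GP, and the identity $\delta_j+r=(1+r)-(1-\delta_j)$ is precisely what exposes the recovery resource $1-\delta_j$ in multiplicative form so that the logarithmic resource model~(\ref{eq:RM}) linearizes. Two technical points remain. First, $y=\log p$ requires $p>0$; this holds when $C>0$, since $(rI-A)^{-1}\ge 0$ has positive diagonal, as in the proof of Lemma~\ref{L2}. Second, I must argue that replacing the node-impact equation by the inequality~(\ref{eq:C1}) is lossless: since $\hat{x}(0)\ge 0$, the objective is nondecreasing in each $p_i$, and by Lemma~\ref{L2} every feasible $p$ dominates the true node impact $\bar p$ induced by the chosen $A$, so the minimizer drives $p$ down to $\bar p$. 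Hence the inequality formulation is exact, and the transformation is a genuine equivalence rather than a relaxation.
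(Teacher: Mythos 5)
Your proof is correct and follows essentially the same route as the paper's: the same row-wise expansion of (\ref{eq:C1}), the same key rearrangement $\delta_j + r = (1+r) - (1-\delta_j)$ followed by division by $p_j(1+r)$ and a logarithmic transform yielding (\ref{eq:posyC1}) and then (\ref{eq:C4}), and the same term-by-term translation of the budget and box constraints, with convexity via log-sum-exp (the paper phrases this as posynomials being convex in log scale). Your two added technical points --- that feasibility of the log transform requires $p>0$ (guaranteed when $C>0$ by inverse-positivity of $rI-A$), and that the inequality form of (\ref{eq:C1}) is lossless by monotonicity --- are refinements the paper leaves implicit rather than a different approach.
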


\begin{proof}
The objective function (\ref{eq:OF}) follows directly from (\ref{eq:C0}) and $y=\text{log}(p)$. To obtain constraint (\ref{eq:C4}) from (\ref{eq:C1}) we take that (\ref{eq:C1}) is equivalent to $\sum^{n}_{i=1}p_{i}\left (A_{ij}-rI\right ) \leq -C_{j}$ for all $j$. Now using (\ref{eq:epi}) this can be rewritten as 
\begin{equation}
\sum_{i\neq j}p_{i}\beta_{ij}-p_{j}\delta_{j}-p_{j}r \leq - C_{j}, \quad \forall j
\end{equation}
which is equivalent to
\begin{equation}
\label{eq:posyC1}
\sum_{i \neq j} \frac{p_{i}\beta_{ij}}{p_{j}(1+r)} + \frac{1-\delta_{j}}{1+r}+ \frac{c_{j}}{p_{j}(1+r)} \leq 1 \quad \forall j.
\end{equation}
Taking the log of both sides
and rewriting gives (\ref{eq:C4}). Now, (\ref{eq:C5}) follows directly from (\ref{eq:C2}) and $u_{ij}=f_{ij}\left(\beta_{ij}\right)$ and $v_{i}=g_{i}\left(1-\delta_{i}\right)$. Finally rewriting (\ref{eq:C3}) gives $ 0 < \frac{\underline{\beta}_{ij}}{\overline{\beta}_{ij}} \leq \frac{\beta_{ij}}{\overline{\beta}_{ij}} \leq 1 $ which is equivalent to $ 0 \leq \text{log}\left(\frac{\overline{\beta}_{ij}}{\beta_{ij}} \right) \leq \text{log}\left(\frac{\overline{\beta}_{ij}}{\underline{\beta}_{ij}} \right) $ and can be rewritten to (\ref{eq:C6}) using $u_{ij}=w_{ij}\text{log}\left(\frac{\overline{\beta}_{ij}}{\beta_{ij}}\right)$. The bounds on $v_{i}$ (\ref{eq:C7}) can be found in the same way. To show that this optimization problem is convex, we can use the fact that monomials and posynomials are convex in log-scale \cite{boyd2004convex}. The objective and all constraints except (\ref{eq:C1}) of Problem 1 are already defined as such. Using the rewritten constraint (\ref{eq:posyC1}), we obtain a posynomial constraint and hence, our optimization problem is convex in log scale. 
\end{proof}

\begin{prop}
Problem 2 is equivalent to the following convex optimization problem under the transformation $y=\text{log}(p)$ and $u_{ij}=f_{ij}\left(\beta_{ij}\right)$ and $v_{i}=g_{i}\left(1- \delta_{i}\right)$
\begin{align}
    \underset{y, u, v}{ \text{minimize}} & \quad \quad \sum_{ij} u_{ij} + \sum_{i} v_{i}  \label{eq:l1}\\
    \text{such that} & \quad \quad  (\ref{eq:C4}), (\ref{eq:C6}), (\ref{eq:C7}), \nonumber \\
& \quad \quad \text{max}(\text{log}(\hat{x}(0))+ y) \leq \text{log}(\gamma). 
\end{align}
\end{prop}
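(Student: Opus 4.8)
The plan is to follow the proof of Proposition~1 almost verbatim, since Problem~2 has exactly the same constraint structure as Problem~1 and differs only in that the roles of the risk term and the resource cost are interchanged: here $\sum_{ij} f_{ij}(\beta_{ij}) + \sum_i g_i(\delta_i)$ is minimized while the maximum risk is held below $\gamma$. First I would observe that the objective (\ref{eq:l1}) is immediate from the definitions $u_{ij}=f_{ij}(\beta_{ij})$ and $v_i=g_i(1-\delta_i)$, which turn the nonlinear resource cost directly into the linear sum $\sum_{ij}u_{ij}+\sum_i v_i$.

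Next I would reuse, without change, the derivation in Proposition~1 that converts the dynamics constraint $p^{T}A-rp^{T}\le -C$ into the log-sum-exp constraint (\ref{eq:C4}): the rewriting into the posynomial form (\ref{eq:posyC1}) and the subsequent logarithm are identical, since neither step depends on whether the risk appears in the objective or in a constraint. Likewise the box constraints $0<\underline{\beta}_{ij}\le\beta_{ij}\le\overline{\beta}_{ij}$ and $0<\underline{\delta}_i\le\delta_i\le\overline{\delta}_i<1$ map, under the same change of variables, onto the bounds (\ref{eq:C6}) and (\ref{eq:C7}) exactly as before. The one genuinely new step is the risk bound: the constraint $p_i\hat{x}_i(0)\le\gamma$ holds for every $i$ iff $\max_i p_i\hat{x}_i(0)\le\gamma$, and since $p_i>0$, $\hat{x}_i(0)>0$, $\gamma>0$ and $\log$ is monotone, this is equivalent to $\max_i(\log\hat{x}_i(0)+y_i)\le\log\gamma$ with $y=\log(p)$, which is precisely the new constraint of the proposition.

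Finally I would verify convexity and the exponential-cone form. The objective is now linear in $(u,v)$; the bounds (\ref{eq:C6}) and (\ref{eq:C7}) are linear; the risk bound is affine in $y$; and (\ref{eq:C4}) is a log-sum-exp inequality, which is convex and representable with exponential cones. Hence the transformed problem is convex. To claim genuine \emph{equivalence} rather than mere implication, I would note that the change of variables is a bijection on the feasible domain: $y=\log(p)$ is invertible for $p>0$, and each $f_{ij}$ and $g_i$ is strictly monotone, so $u_{ij}$ and $v_i$ recover $\beta_{ij}\in[\underline{\beta}_{ij},\overline{\beta}_{ij}]$ and $\delta_i\in[\underline{\delta}_i,\overline{\delta}_i]$ uniquely. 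I do not expect any real obstacle: essentially all of the analytical content is inherited from Proposition~1, and the only delicate point is confirming that moving the maximum-risk expression from the objective into a constraint still yields an affine (hence convex) condition after the logarithmic transformation, which it does precisely because $\log$ is monotone.
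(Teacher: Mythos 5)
Your proposal is correct and takes exactly the route the paper intends: the paper's own proof of this proposition is the single line ``Similar to the proof of Proposition 1,'' and your argument is precisely that, with the one genuinely new ingredient (the equivalence of $p_i\hat{x}_i(0)\le\gamma$ with $\max_i(\log\hat{x}_i(0)+y_i)\le\log\gamma$ via monotonicity of $\log$) handled correctly. Your additional remarks on bijectivity of the change of variables and on the exponential-cone representability are details the paper leaves implicit, but they do not change the approach.
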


\begin{proof}
Similar to the proof of Proposition 1.
\end{proof}

The convex optimization formulation also clearly shows why our proposed resource model encourages sparsity. Our resource model, now formulated as constraint (\ref{eq:C5}) and objective (\ref{eq:l1}) are an $\ell_{1}$ type constraint and objective that are known to encourage sparsity \cite{tibshirani1996regression,candes2006robust,donoho2006compressed,Candes2008}.

\subsection{Reweighted $\ell_{1}$ minimization}
\label{subsec:L0}
If the goal is maximal sparsity, i.e. minimal number of nodes with non-zero resources allocation, then we can apply the reweighted $\ell_{1}$ optimization approach of \cite{Candes2008}.  We can apply this to our problem by iteratively solving Problem 1 or 2, but with a reweighted resource model that approximates the number of nodes with non-zero allocation:
\begin{equation}
\label{eq:L1}
h^k=\sum_{ij} \frac{u^{k}_{ij}}{u_{ij}^{k-1}+\epsilon}+\sum_{i} \frac{v^{k}_{i}}{v_{i}^{k-1}+\epsilon}
\end{equation} 
where $k$ is the iteration number and $\epsilon$ a very small number to improve numerical stability. For Problem 1 we now replace constraint (\ref{eq:C5}) with $h^k \leq M$ where $M$ is the bound on the number of nodes and links that can have resources allocated to them. For Problem 2 the objective changes from minimizing (\ref{eq:l1}) to minimizing $h^k$. This iteration has no guarantee of convergence or global optimality, but has been found to be very effective in many cases.

\section{NUMERICAL RESULTS}
We illustrate our method with a simplified model of a wildfire. Let us consider the fictional landscape given in Fig. \ref{fig:Landscape} consisting of different vegetation types, a city and water. We represent this landscape as a network graph with $n=1000$ nodes, where the set of edges $\mathcal{E}$ is based on an 8-node spreading direction grid, i.e. fire can spread from each node to its direct neighboring 8 nodes (horizontal, vertical and diagonal). 

The spreading rates are now determined by the vegetation type, wind speed and direction. These values are computed based on data from wildfire models given in \cite{Karafyllidis1997a} and \cite{Alexandridis2008a}, where the stochastic spreading rates are of the form
\begin{equation}
\beta=\beta_{b}\beta_{veg}\beta_{w}.
\end{equation}
The baseline spreading rate $\beta_{b}=0.5$ and $ \beta_{veg}=0.1, 1$ and $1.4$ for respectively desert, grassland and eucalyptic forest. Water is considered unburnable and those edges are removed, resulting in a total number of $3486$ non-zero edges. $\beta_{w}$ is calculated following \cite{Alexandridis2008a} for a northeasterly wind of $V=4$ m/s. Furthermore the spreading rate $\beta$ is corrected for spreading between diagonally connected nodes, following \cite{Karafyllidis1997a}. The cost of the city nodes is taken as $c_{i}=1$, whereas $c_{i}=0.01$ for all other nodes. Finally, the discount rate is set to $r=3.5$ and we take into account a fire likelihood map as depicted in Fig. \ref{fig:LM}. For simplicity we will only consider resource allocation on the spreading rate $\beta$ and take $\delta=0.2$ and $w_{ij}=1$ for all edges $(i,j)$.

\begin{figure}
  \centering
    \includegraphics[width=0.95\linewidth]{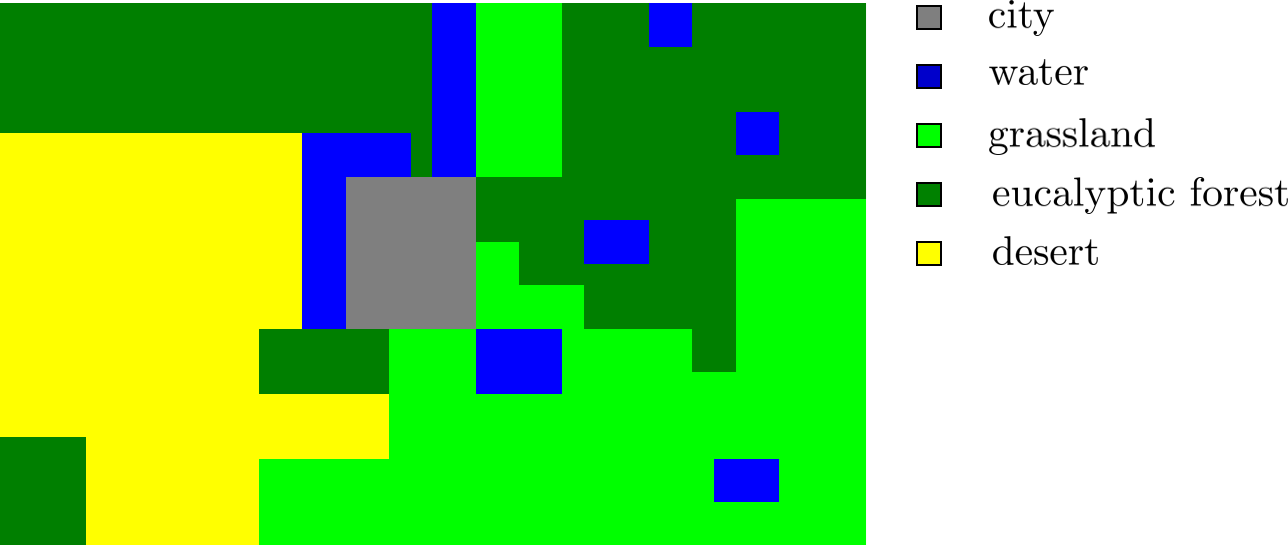}
    \caption{Fictional landscape with different area types, represented as a grid with $n=1000$ nodes.}
    \label{fig:Landscape}
\end{figure}

\begin{figure}
  \centering
   \def\svgwidth{0.4\textwidth}
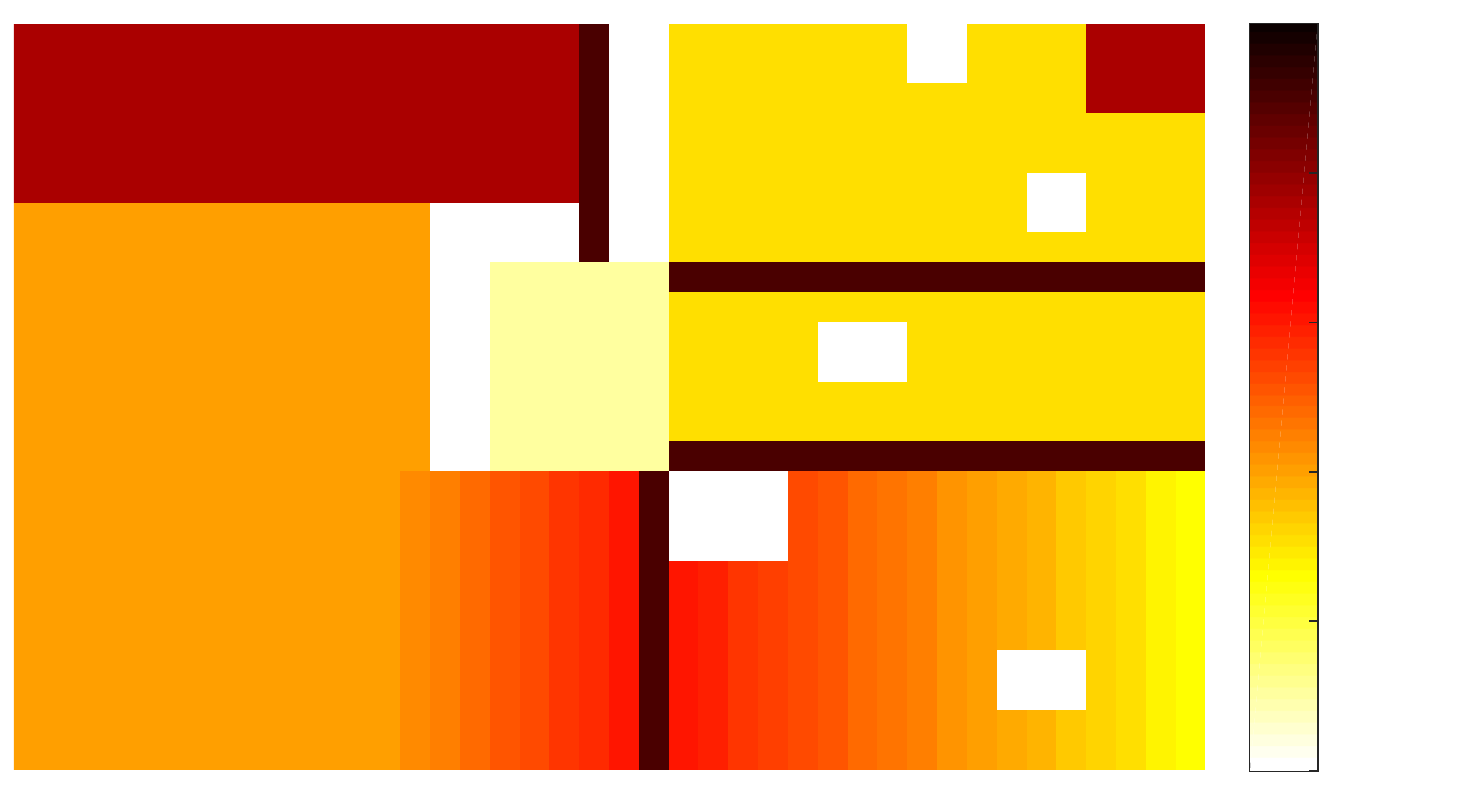
    \caption{Likelihood Map representing the likelihood of a fire outbreak in landscape Fig. \ref{fig:Landscape}.}
    \label{fig:LM}
\end{figure}

The optimization problems are solved with YALMIP \cite{lofberg2004yalmip} and MOSEK in Matlab, and all can be solved on a standard desktop computer within seconds.\footnote{Code available on https://github.com/imanchester/SpreadingProcesses.} 

We compare both our proposed  risk model and resource model with those presented in \cite{Preciado2014}. But to have a comparable results we must consider each in turn.

Firstly, we compare our proposed risk model (\ref{eq:Risk2}) with minimizing the dominant eigenvalue (\ref{eq:eigP}). We do this via Problem 1, the budget-constrained resource allocation: we fix a constraint the proposed resource model (\ref{eq:RM}) and compare minimizing (\ref{eq:Risk2}) to minimizing (\ref{eq:eigP}). We take a resource allocation budget of $\Gamma=25$ and $\underline{\beta_{i}}=1 \times 10^{-4}$ for all nodes. The resulting allocations are shown in Fig. \ref{fig:P1}. If the link is plotted that indicates there is a resources allocation to that edge, where the color indicates the ratio of reduction $u_{ij}$. Here red indicates full reduction to $\underline{\beta}_{i}$ and the darker blue the lower the investment on that edge. 

\begin{figure}
     \begin{subfigure}[b]{0.95\linewidth}
\centering
   \def\svgwidth{1\textwidth}
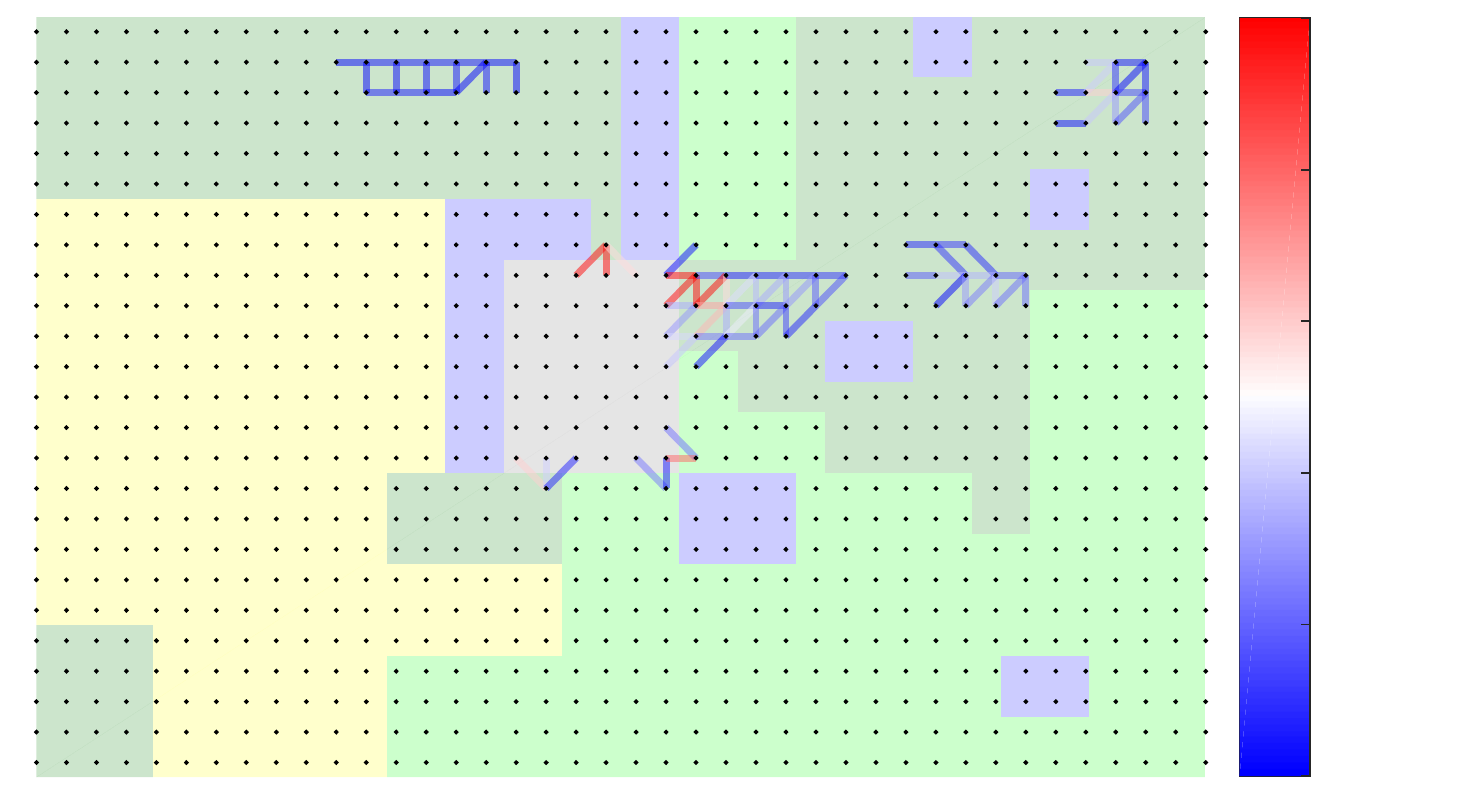
\caption{Minimize the proposed risk function (\ref{eq:Risk2}) subject to constraints on the proposed resource model (\ref{eq:RM}).}
      \label{fig:P1Ours}
  \end{subfigure}
  ~ 
\begin{subfigure}[b]{0.95\linewidth}
        \def\svgwidth{\textwidth}
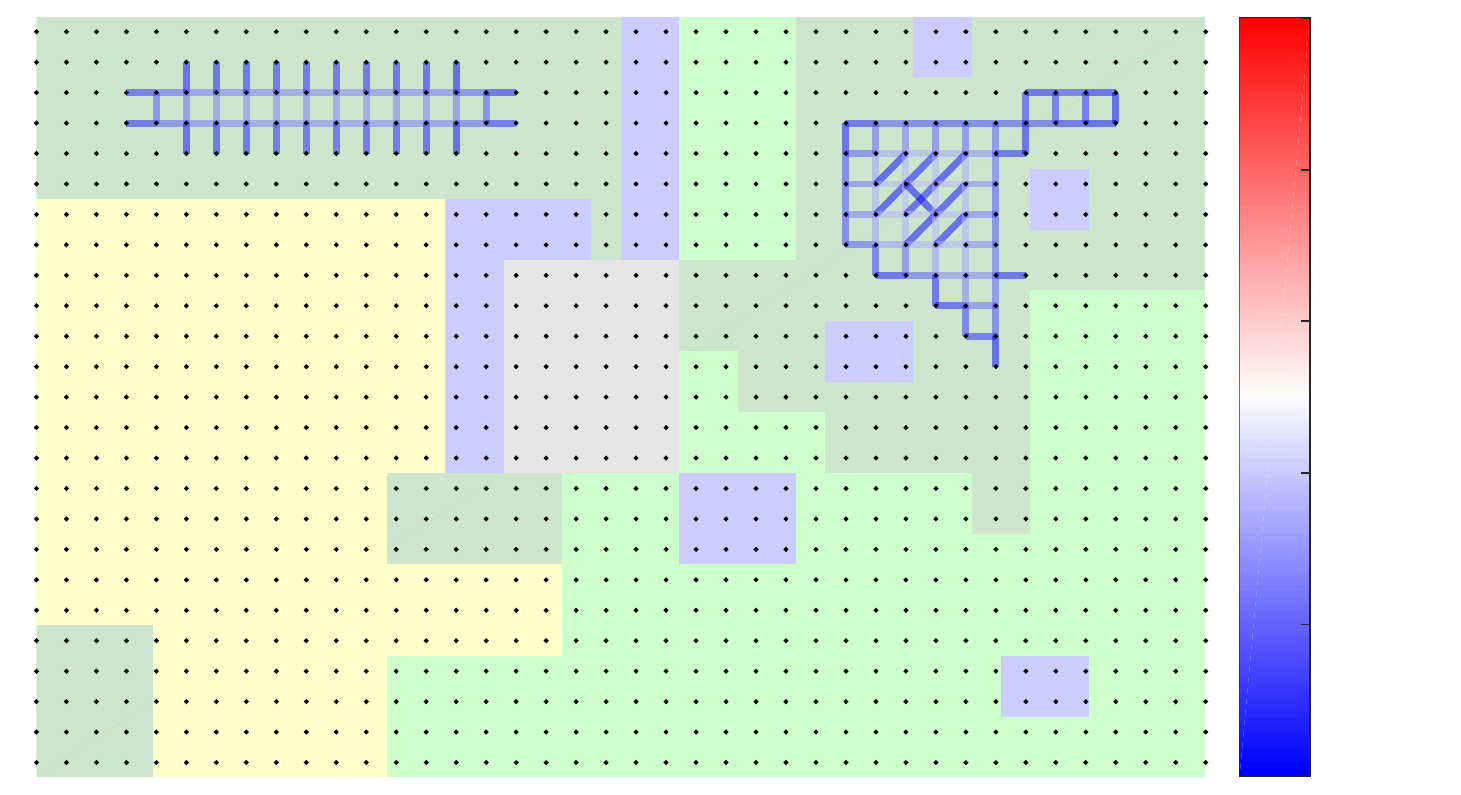
      \caption{Minimize the dominant eigenvalue (\ref{eq:eigP}) from \cite{Preciado2014} subject to constraints on the proposed resource model (\ref{eq:RM}).}
      \label{fig:P1Preciado}
  \end{subfigure}
      \caption{Resource Allocation Map for Problem 1.}
\label{fig:P1}
\end{figure}

Our approach allocates resources in such a way that the worst-case risk of any localised outbreak is minimized. In particular in Fig. \ref{fig:P1Ours} it can be seen that the model results in containment lines to protect high cost areas from areas with high risk of spread. If the dominant eigenvalue is minimized (Fig. \ref{fig:P1Preciado}), all areas of the landscape is considered equally important and containment lines are not obtained.  

Secondly, we compare resource models. We do this via Problem 2, i.e. risk-constrained resource minimization. The risk bound that we use is $\gamma=0.0516$, which was the risk bound achieved via Problem 1 above, as plotted in Fig. \ref{fig:P1Ours}. Therefore, Fig. \ref{fig:P1Ours} also shows the solution for Problem 2 minimizing our proposed resource model (\ref{eq:RM}) subject to this resource constraint. In Fig. \ref{fig:Preciado008} we show the results for minimizing resource model (\ref{eq:Preciado}) from \cite{Preciado2014} subject to this same resource constraint.

\begin{figure}
\centering
         \def\svgwidth{0.475\textwidth}
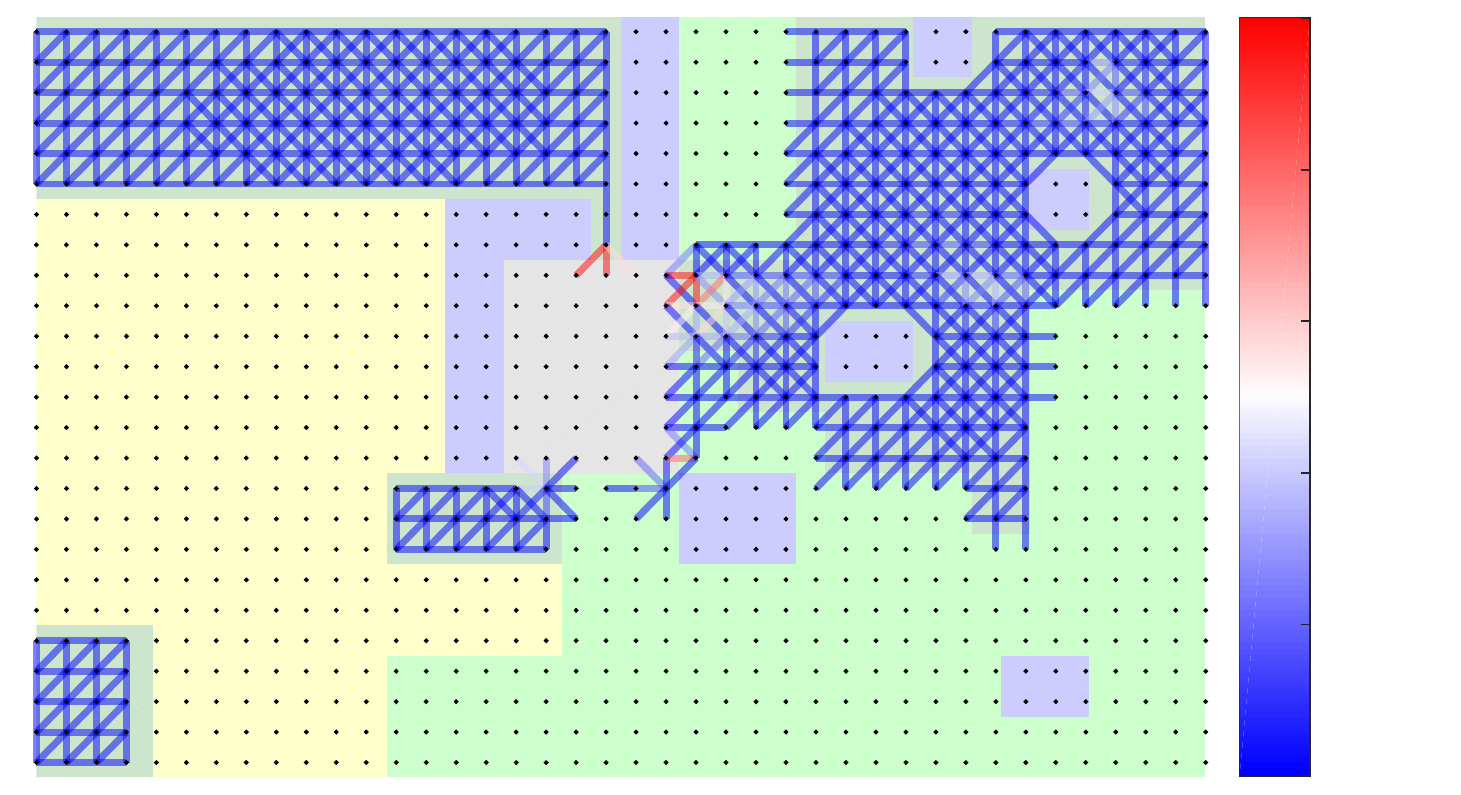
      \caption{Resource Allocation Map for Problem 2 for minimizing the resource model (\ref{eq:Preciado}) from \cite{Preciado2014} subject to constraints on the proposed risk model (\ref{eq:Risk2}).}
      \label{fig:Preciado008}
\end{figure}

We can observe that the resource model of \cite{Preciado2014} allocates a low investment on a large number of nodes. This is due to the low penalty for small investments, whereas our proposed method encourages more sparse allocation. Out of 3486 total edges, resource model (\ref{eq:Preciado}) from \cite{Preciado2014} allocates resources on 1109 edges (Fig. \ref{fig:Preciado008}), whereas our proposed method only invests on 89 edges (Fig. \ref{fig:P1Ours}). 

To further improve sparsity we solve Problem 2 with the reweighted $\ell_{1}$ minimization as explained in Section \ref{subsec:L0}. For Problem 2 we keep the constraints the same, but iteratively minimize (\ref{eq:L1}). The obtained results are displayed in Fig. \ref{fig:L1}. Here the resource allocation is reduced to only 23 edges while achieving the same risk as the results in Figs \ref{fig:P1Ours} and \ref{fig:Preciado008}, which allocated to 89 and 1109 edges, respectively.

\begin{figure}
  \centering
    \def\svgwidth{0.475\textwidth}
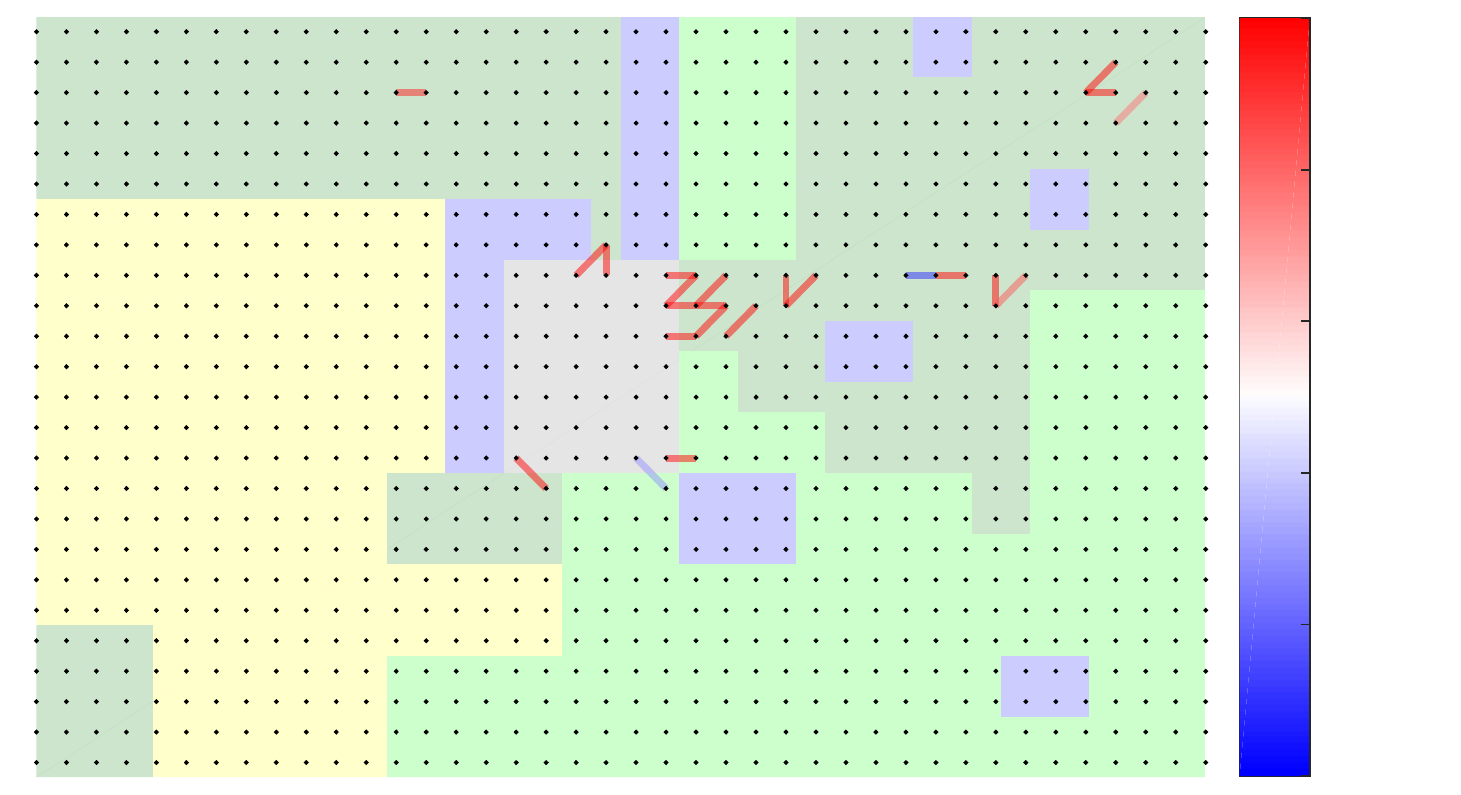
    \caption{Proposed resource allocation after using reweighted $\ell_{1}$ minimisation on Fig. \ref{fig:P1Ours}.}
    \label{fig:L1}
\end{figure}





%
%
%


%
%
%

\bibliographystyle{IEEEtran}
\bibliography{IEEEabrv,ACFR06022020}

\end{document}